\DeclareMathOperator{\diff}{d}
\DeclareMathOperator{\dx}{\diff{x}}%{\diff\!\mathbf{x}}
\DeclareMathOperator{\Diff}{Diff}
\DeclareMathOperator{\Id}{Id}
\newtheorem{theorem}{Theorem}
\newtheorem{corollary}[theorem]{Corollary}
\newtheorem{proposition}[theorem]{Proposition}
\def\MM#1{\boldsymbol{#1}}
\newcommand{\pp}[2]{\frac{\partial #1}{\partial #2}} 
\newcommand{\dede}[2]{\frac{\delta #1}{\delta #2}}
\newcommand{\dd}[2]{\frac{\diff #1}{\diff #2}}
\newcommand{\rem}[1]{}
\newcommand{\revised}[1]{{\color{blue} #1}}
\def\contract{\makebox[1.2em][c]{\mbox{\rule{.6em}
{.01truein}\rule{.01truein}{.6em}}}}
\numberwithin{equation}{section}
\numberwithin{figure}{section}
\begin{document}

\title{Particle relabelling symmetries and Noether's theorem for vertical slice models}
\author{C. J. Cotter$^{1}$ and M.J.P. Cullen$^{2}$}

\addtocounter{footnote}{1}
\footnotetext{Department of Mathematics, Imperial College London. London SW7 2AZ, UK. 
\texttt{colin.cotter@imperial.ac.uk}
\addtocounter{footnote}{1} }
\footnotetext{Met Office, Fitzroy Road, Exeter EX1 3PB.
\addtocounter{footnote}{1}}
\maketitle

\begin{abstract}
  We consider the variational formulation for vertical slice models
  introduced in Cotter and Holm (Proc Roy Soc, 2013). These models
  have a Kelvin circulation theorem that holds on all
  materially-transported closed loops, not just those loops on
  isosurfaces of potential temperature. Potential vorticity
  conservation can be derived directly from this circulation theorem.
  In this paper, we show that this property is due to these models
  having a relabelling symmetry for every single diffeomorphism of the
  vertical slice that preserves the density, not just those
  diffeomorphisms that preserve the potential temperature. This is
  developed using the methodology of Cotter and Holm (Foundations of
  Computational Mathematics, 2012).
\end{abstract}

\tableofcontents

\section{Introduction}

Vertical slice models are models of 3D fluids that assume that all
fields are independent of $y$, except for the (potential) temperature,
which is assumed to take the form $\theta(x,y,z,t)=\theta_S(x,z,t) +
sy$, where $s$ is a time-independent coefficient, this allows for a
model of North-South temperature gradient $s$. These models provide a
simplification of the equations that allows to study geophysical
phenomena such as frontogenesis in idealised geometries. They also
provide a useful testbed for numerical algorithms for atmospheric
dynamical cores, since they only require computation with 2D data, and
so can be run quickly on a desktop computer.

A hierarchy of vertical slice models have been used to study the
formation and evolution of fronts. The vertical slice non-hydrostatic
incompressible Boussinesq equations, the hydrostatic Boussinesq
equations and the corresponding semi-geostrophic equations all exhibit
frontogenesis, whilst representing solutions of the full 3D
equations. As summarised in \cite{Cu2006}, the semi-geostrophic
equations have an optimal transport interpretation. The optimal
transport formulation proves that geostrophic and hydrostatic balance
can be achieved while respecting Lagrangian conservation properties.
The optimal transport formulation has also been used to develop
numerical algorithms for the equations in Lagrangian form, that form
fronts even at low resolution. These solutions provide a useful
comparison for standard Eulerian numerical algorithms for the
non-hydrostatic incompressible Boussinesq equations, by considering
their solutions in the semi-geostrophic limit, as described in
\cite{visram2014framework} who found that the limiting Eulerian
solutions deviated from the semi-geostrophic solution, suggesting that
fronts need some extra parameterisation in Eulerian models. The
limiting Eulerian numerical solutions satisfied geostrophic and
hydrostatic balance to the expected extent, but the Lagrangian
conservation properties were systematically violated. This shows that the
computed solution was fundamentally diffusive, which is not physically
realistic on this scale. Thus, there is a strong incentive to devise
computable formulations which retain Lagrangian conservation
properties, as stated in \cite{Cu2007}.

It would be even more useful to operational weather forecasting to be
able to make these comparisons with compressible models. \cite{Cu2008}
attempted to do this, but encountered the problem that it is not
possible to make a compressible vertical slice model whose solutions
represent solutions of the full 3D compressible equations. A vertical
slice model was proposed that can be obtained as a small modification
of a 3D compressible model, but since the model did not conserve
energy, it was not possible to make meaningful comparisons on a long
timescale. Then, \cite{cotter2013variational}, developed a variational
framework for deriving fluid models in the vertical slice framework
that conserve energy and potential vorticity. As well as recovering
the incompressible Euler-Boussinesq equations,
\cite{cotter2013variational} provided a number of new models,
including an $\alpha$-regularised vertical slice model, and a
compressible vertical slice model, which fixed the lack of
conservation in the compressible model of \cite{Cu2008}.

In \cite{cotter2013variational}, it was noted that the slice
Euler-Poincar\'e equations with advected density $D$ and potential
temperature $\theta_S$ have a Kelvin circulation theorem
\begin{align}
\frac{d}{dt}\oint_{c(u_S)}\hspace{-1mm}
\left(s\left(\frac1D\dede{l}{u_S}\right)  - \left(\frac1D\dede{l}{u_T}\right)\,\nabla\theta_S  \right)\cdot \dx
=
0
\,,
\label{EPSD-circthm}
\end{align}
where $c(u_S)$ is a loop advected by the in-slice velocity $u_S$,
$u_T$ is the transverse velocity (the component orthogonal to the
slice), \revised{the potential temperature is written as
\begin{equation}
  \theta(x,y,z,t) = \theta_S(x,z,t) + sy,
\end{equation}
where $s$ is the constant temperature gradient in the $y$-direction,}
and $l[u_S,u_T,\theta_S,D]$ is the reduced Lagrangian. For the
case of the incompressible Euler-Boussinesq slice model, this becomes
\begin{equation}
  \dd{}{t}\oint_{C(t)}\left(su_S
  - (u_T + fx)\nabla\theta_S\right)\cdot \diff x = 0.
\end{equation}
This is curious because the circulation theorem for EP equations on
the diffeomorphism group (instead of the semidirect product used in
the slice models) has a baroclinic term that only vanishes which the
curve is on a $\theta$-isosurface. In the slice model case, this term
is replaced by an additional term in the circulation, leading to a
conserved circulation on \emph{any loop}, $\theta$-isosurface or not.
Since the contour used to defined the circulation theorem is a
material contour, it cannot cross fronts in the semi-geostrophic
limits. In the semi-geostrophic solutions, fronts are singularities in
the optimally transporting map where the boundary of the domain has
been mapped into the interior. Thus solutions which are discontinuous
in physical space are still compatible with the circulation theorem,
as long as this restriction is made.  These formulations will preserve
Lagrangian conservation properties. However, computing the solutions
remains challenging. Care with use potential vorticity conservation
near the semi-geostrophic limit for standard models is therefore also
required.

In this paper, we show that this Kelvin circulation theorem occurs
because of relabelling symmetries that occur in the variational
description of \cite{cotter2013variational}, by using the variational
tools for applying Noether's Theorem to the Euler-Poincar\'e theory
discussed in \cite{cotter2013noether}. The rest of this paper is structured as follows. In Section
\ref{sec:variational}, we review the variational formulation of
vertical slice models. In Section 
\ref{sec:relabelling}, we show how the conserved potential vorticity
arises from relabelling symmetries that are specific to the slice
geometry. Finally in Section
\ref{sec:summary} we provide a summary and outlook.

\section{Variational formulations of slice models}
\label{sec:variational}
In this section, we briefly review the variational formulation of
slice models. In a vertical slice model, we assume that the velocity
field is independent of $y$, but still has a component in the
$y$-direction. This means that the Lagrangian flow map can be
written
\begin{equation}
  \Phi(X,Y,Z,t) = (x(X,Z,t), y(X,Z,t) + Y, z(X,Z,t)),
\end{equation}
where $X,Y,Z$ are Lagrangian labels, $(x,y,z)$ are particle locations
and $t$ is time. Equivalently, the Lagrangian flow map can be
represented by a diffeomorphism\footnote{The assumption that the
  invertible map $\Phi$ is smooth is necessary to compute the
  potential vorticity equation. However, the optimal transport
  formulation relaxes the smoothness requirements, enabling weaker
  solutions such as the semigeostrophic frontal solutions which are
  weak Lagrangian solutions. The Lagrangian formulation is likely to
  have some sort of global existence, based on conservation
  properties, but the Eulerian formulation may only work in smooth
  cases, which may not be generic for weather phenomena such as
  fronts.} of the vertical slice $x-z$ plane \revised{combined with an}
$(x,z)$-dependent displacement in the $y$-direction, which we write as
an element $(\phi, f)$ of the semidirect product
$\Diff(\Omega)\circledS \mathcal{F}(\Omega)$, where $\circledS$
denotes the semidirect product, and $\mathcal{F}(\Omega)$ denotes an
appropriate space of smooth functions on $\Omega$ that specify the
displacement of Lagrangian particles in the $y$-direction at each
point in $\Omega$. \revised{Then, if we write $\MM{X}=(x,z)$, the transformation
takes the form
\begin{equation}
(\MM{x},y) =  \Phi(\MM{X},Y) = (\phi(\MM{X}), Y + f(\MM{X})).
\end{equation}}

In this representation, composition of two slice flow maps $(\phi_1,f_1)$
and $(\phi_2,f_2)$ is obtained from the semi-direct product formula
\cite{HoMaRa1998},
\begin{equation}
(\phi_1,f_1)\cdot(\phi_2,f_2) = (\phi_1\circ\phi_2, f_1\circ \phi_2 + f_2).
\label{SDgroupaction}
\end{equation}
We see that the vertical slice diffeomorphisms $\phi_1$ and $\phi_2$
compose in the normal way, whilst the combined vertical deflection is
obtained by moving $f_1$ with $\phi_2$ before adding $f_2$.

We represent Eulerian velocity fields by splitting into two components
$(u_S,u_T)$ where $u_S$ is the ``slice'' component in the $x$-$z$
plane, and $u_T$ is the ``transverse'' component in the $y$ direction.
$(u_S,u_T)$ is considered as an element in the semidirect product Lie
algebra $\mathfrak{X}(\Omega)\circledS \mathcal{F}(\Omega)$ where
$\mathfrak{X}(\Omega)$ denotes the vector fields on $\Omega$,
representing the two components of the velocity
$u_S\in\mathfrak{X}(\Omega)$ and $u_T\in \mathcal{F}(\Omega)$. This
Lie algebra has a Lie bracket, which we shall make use of below, given
by
\begin{equation}
[(u_S,u_T),(w_S,w_T)] = \left([u_S,w_S] ,\,u_S\cdot\nabla w_T -
w_S\cdot\nabla u_T \right),
\end{equation}
where $[u_S,w_S]=u_S\cdot\nabla w_S - w_S\cdot\nabla u_S$ is the Lie
bracket for the time-dependent vector fields
$(u_S,w_S)\in\mathfrak{X}(\Omega)$, and $\nabla$ denotes the gradient
in the $x$-$z$ plane.

A time-dependent Lagrangian flow map must satisfy the equation
\begin{equation}
  \label{eq:flow map}
  \pp{}{t}(\phi,f) = (u_S,u_T)(\phi,f) = (u_S\circ\phi, u_T \circ \phi),
\end{equation}
for some slice vector field $(u_S,u_T)\in
\mathfrak{X}(\Omega)\circledS \mathcal{F}(\Omega)$. Similarly, if
within Hamilton's principle we consider a one-parameter family of
perturbations $(\phi_\epsilon,f_\epsilon)$ (parameterised by $\epsilon$)
to $(\phi,f)$, then
\begin{equation}
  \delta(\phi,f) = \lim_{\epsilon\to 0}\frac{(\phi_\epsilon,f_\epsilon)-(\phi,f)}{\epsilon} = (w_S,w_T)(\phi,f) = (w_S\circ\phi, w_T\circ \phi),
\end{equation}
for some time-dependent slice vector field $(u_S,u_T)$ that generates
the infinitesimal perturbations at $\epsilon=0$. Taking $\epsilon$ and
time-derivatives of these two expressions and comparing leads to
\begin{equation}
  \delta(u_S,u_T) = \pp{}{t}(w_S,w_T) + [(u_S,u_T),(w_S,w_T)],
\end{equation}
which gives us a formula telling us how perturbations in $(\phi,f)$ lead
to perturbations in $(u_S,u_T)$ that we can use in Hamilton's principle.

In order to build geophysical models, we also need to consider
advected tracers and densities. In this modelling framework, we assume
that densities are independent of $y$, so that the continuity equation
becomes
\begin{equation}
  \pp{}{t}D + \nabla\cdot(u_SD) = 0,
\end{equation}
which we write in geometric notation as
\begin{equation}
  \pp{}{t}D\diff S + \mathcal{L}_{u_S}D\diff S = 0,
\end{equation}
where $\diff S$ is the area form (so that $D\diff S\in
\Lambda^2(\Omega)$, the space of 2-forms on $\Omega$) and
$\mathcal{L}_{u_S}$ is the Lie derivative with respect to $u_S$,
given via Cartan's Magic Formula as 
\begin{equation}
  \mathcal{L}_{u_S}D\diff S = \diff (u_S\contract D \diff S),
\end{equation}
recalling that $\diff(D \diff S)=0$. Under an infinitesimal
perturbation to $(\phi,f)$ the density changes according to
\begin{equation}
  \label{eq:D}
  \delta D\diff S + \mathcal{L}_{w_S}D\diff S \revised{=0}.
\end{equation}

We assume that tracers can be written as sum of a $y$-independent
component $\theta_S$ plus a time-independent component $sy$ with
linear dependence on $y$ (geophysically this allows for North-South
temperature gradients that are necessary for the baroclinic
instability that leads to frontogenesis). \revised{This means 
  that the flow map $(\phi, f)$ transports initial conditions $(\theta_{S,0},s_0)$ according
  to
  \begin{equation}
    \theta_S = \phi_*(\theta_{S_0} - s_0f), \quad s = s_0.
    \end{equation}
  Time-differentiation then leads to the transport equation}
\begin{equation}
  \pp{}{t}\theta_S + u_S\cdot\nabla \theta_S \revised{+sw_T} = 0, \quad \pp{s}{t}=0,
\end{equation}
which we write in geometric notation as
\begin{equation}
  \pp{}{t}(\theta_S,s) + \mathcal{L}_{(u_S,u_T)}(\theta_S,s) = 0, 
\end{equation}
where
\begin{equation}
  \mathcal{L}_{(u_S,u_T)}(\theta_S,\,{s} )
  =\left(u_S\contract \diff \theta_S + s u_T, 0\right),
\end{equation}
and where we have considered $(\theta_S,s)\in
\mathcal{F}(\Omega)\times \mathbb{R})$.  Similarly, infinitesimal
perturbations to $(\phi,f)$ lead to infinitesimal perturbations to
$(\theta_S,s)$ given by
\begin{equation}
  \delta(\theta_S,s) + \mathcal{L}_{(w_S,w_T)}(\theta_S,s) = 0.
\end{equation}

Given a Lagrangian $l[(u_S,u_T),D,(\theta_S,s)]$, we define
\begin{eqnarray}
  \nonumber
&  \delta l[(u_S,u_T),D,(\theta_S,s); (\delta u_S,\delta u_T),\delta D,
    (\delta \theta_S, \delta s)] \\
  =& \lim_{\epsilon\to 0}\frac{1}{\epsilon}
    \left(l[(u_S,u_T) + \epsilon(\delta u_S,\delta u_T),D + \epsilon \delta D,
      (\theta_S,s) + \epsilon (\delta \theta_S,\delta s)]
    -l[(u_S,u_T),D,(\theta_S,s)]\right)
 \nonumber  , \\
  &= \int_\Omega \dede{l}{u_S}\cdot \delta u_S + \dede{l}{u_T}\delta u_T
  + \dede{l}{D}\delta D + \dede{l}{\theta_S}\delta\theta_S + \dede{l}{s}\delta s
  \diff S, 
\end{eqnarray}
for all $(\delta u_S,\delta u_T) \in \mathfrak{X}(\Omega)\circledS
\mathcal{F}(\Omega),\delta D \in \Lambda^2(\Omega), (\delta \theta_S,
\delta s) \in \mathcal{F}(\Omega)\times \mathbb{R}$. Geometrically, we
interpret $\dede{l}{u_S}\cdot \diff x \otimes \diff S$ as being a
1-form-density in $\Lambda^1(\Omega)\otimes \Lambda^2(\Omega)$, the
dual space of vector fields $\mathfrak{X}(\Omega)$. $\dede{l}{v_T}\diff S$
and $\dede{l}{\theta_S}\diff S$ are interpreted as being densities in
$\Lambda^2(\Omega)$, whilst $\dede{l}{D}$ is interpreted as being a
function in $\mathcal{F}(\Omega)$, and $\dede{l}{s}\in \mathbb{R}$.

Proceeding with Hamilton's principle $\delta S = 0$, where
\begin{equation}
  \delta S = \int_{t_0}^{t_1} l[(u_S,u_T),D,(\theta_S,s)] \diff t,
\end{equation}
considering variations 
\begin{align}
  \delta(u_S,u_T) &= \pp{}{t}(w_S,w_T) + [(u_S,u_T),(w_S,w_T)], \\
  \delta D\diff S &=  -\mathcal{L}_{w_S}D\diff S, \\
  \delta(\theta_S,s) &= -\mathcal{L}_{(w_S,w_T)}(\theta_S,s), 
\end{align}
for perturbation-generating velocity fields $(w_S,w_T)$ that
vanish at $t=t_0$ and $t=t_1$, we obtain
\begin{align}
\nonumber  0  &= \delta S \\ 
\nonumber  &= \int_{t_0}^{t_1} \int_\Omega \dede{l}{u_S}\cdot \delta u_S + \dede{l}{u_T}\delta u_T
  + \dede{l}{D}\delta D + \dede{l}{\theta_S}\delta\theta_S + \dede{l}{s}\delta s
  \diff S \diff t, \\
\nonumber  &= \int_{t_0}^{t_1} \int_\Omega \dede{l}{u_S}\cdot \left(\pp{}{t}w_S + [u_S,w_S]\right)\diff S + \dede{l}{u_T}\left(\pp{}{t}w_T + \mathcal{L}_{u_S}w_T 
  - \mathcal{L}_{w_S}u_T\right)\diff S \\
\nonumber& \qquad  - \dede{l}{D}\mathcal{L}_{w_S}(D\diff S) -
  \mathcal{L}_{(w_S,w_T)}\theta_S
  \dede{l}{\theta_S}\diff S \diff t, \\
 \nonumber &= -\int_{t_0}^{t_1} \int_\Omega w_S\cdot \left(
  \left(\pp{}{t} + \mathcal{L}_{u_S}\right)\dede{l}{u_S}\cdot\diff x \otimes
  \diff S + \dede{l}{u_T}\diff u_T\otimes \diff S + \dede{l}{\theta_S}\diff \theta_S \otimes \diff S
  - D\diff \dede{l}{D} \otimes \diff S\right) \nonumber \\
 &\qquad  + \int_\Omega w_T\left(\left(\pp{}{t} + \mathcal{L}_{u_S}\right)\dede{l}{u_T}\diff S 
  + \dede{l}{\theta_S}s\diff S\right)\diff t,
\end{align}
where we have integrated by parts in time and space, and have used the identity
\begin{equation}
  \int_\Omega m\cdot [u,v]\otimes \diff S = -\int_\Omega v\cdot\mathcal{L}_{u_S}
  m\cdot x\otimes S,
\end{equation}
for all $u,v \in \mathfrak{X}(\Omega)$, $m\cdot\diff x \otimes \diff S
\in \Lambda^1(\Omega)\otimes \Lambda^2(\Omega)$, and where the
Lie derivative of a one-form density is defined as
\begin{equation}
  \mathcal{L}_{u_S} (m\cdot \diff x \otimes \diff S)
  = (\mathcal{L}_{u_S} m\cdot \diff x)\otimes \diff S + m\cdot \diff x
  \otimes \diff S.
\end{equation}

Since $(w_S,w_T)$ are arbitrary, save for the endpoint conditions, we
obtain (for sufficiently smooth solutions),
\begin{align}
  \label{eq:EPS}
  \left(\pp{}{t} + \mathcal{L}_{u_S}\right)\dede{l}{u_S}\cdot\diff x \otimes
  \diff S + \dede{l}{u_T}\diff u_T\otimes \diff S + \dede{l}{\theta_S}\diff \theta_S \otimes \diff S
  - D\diff \dede{l}{D} \otimes \diff S & = 0, \\
  \label{eq:EPT}
  \left(\pp{}{t} + \mathcal{L}_{u_S}\right)\dede{l}{u_T}\diff S 
  + \dede{l}{\theta_S}s\diff S & = 0.
\end{align}
It becomes useful to write
\begin{equation}
  \dede{l}{u_S}\cdot x\otimes \diff S = \frac{1}{D}\dede{l}{u_S}\cdot x
  \otimes D \diff S, 
\end{equation}
in which case
\begin{equation}
  \left(\pp{}{t} + \mathcal{L}_{u_S}\right)\frac{1}{D}\dede{l}{u_S}\cdot x\otimes D \diff S = 
  \left(\left(\pp{}{t} + \mathcal{L}_{u_S}\right)\frac{1}{D}\dede{l}{u_S}\cdot x\right) 
  \otimes D \diff S,
\end{equation}
after making use of the continuity equation for $D\diff S$. Similarly,
we have
\begin{equation}
  \left(\pp{}{t} + \mathcal{L}_{u_S}\right)\frac{1}{D}\dede{l}{u_T} D \diff S = 
  \left(\left(\pp{}{t} + \mathcal{L}_{u_S}\right)\frac{1}{D}\dede{l}{u_T}\right) D \diff S.
\end{equation}
Hence we obtain
\begin{align}
  \label{eq:dlduS/D}
  \left(\pp{}{t} + \mathcal{L}_{u_S}\right)\frac{1}{D}\dede{l}{u_S}\cdot\diff x + \frac{1}{D}\dede{l}{u_T}\diff u_T + \frac{1}{D}\dede{l}{\theta_S}\diff \theta_S
  - \diff \dede{l}{D} & = 0, \\
  \label{eq:dlduT/D}
  \left(\pp{}{t} + \mathcal{L}_{u_S}\right)\frac{1}{D}\dede{l}{u_T}
  + \frac{1}{D}\dede{l}{\theta_S}s & = 0.
\end{align}

Translating back into vector calculus notation, we get
\begin{align}
  \left(\pp{}{t} + u_S \cdot \nabla + (\nabla u_S)^T\cdot \right)\frac{1}{D}\dede{l}{u_S}
  + \frac{1}{D}\dede{l}{u_T}\nabla u_T + \frac{1}{D}\dede{l}{\theta_S}\nabla \theta_S
  - \nabla \dede{l}{D} & = 0, \\
  \left(\pp{}{t} + u_S\cdot\nabla\right)\frac{1}{D}\dede{l}{u_T}
  + \frac{1}{D}\dede{l}{\theta_S}s & = 0.
\end{align}
The Lagrangian for the incompressible Euler-Boussinesq slice equations is
\begin{equation}
  l = \int_\Omega \frac{D}{2}\left(|u_S|^2 + u_T^2\right) + Dfu_T x
  + \frac{g}{\theta_0}D\left(z-\frac{H}{2}\right)\theta_S + p(1-D)\diff S,
\end{equation}
where $f$ is the Coriolis parameter, $g$ is the acceleration due to
gravity, $\theta_0$ is a reference potential temperature, $H$ is the height
of the vertical slice, and $p$ is a Lagrange multiplier introduced to
enforce that the density stays constant. In this case we have
\begin{align}
\begin{split}
\frac1D\dede{l}{u_S}  =  u_S
\,, \qquad
\frac1D\dede{l}{u_T}  =  u_T + fx
\,, \\
\dede{l}{D}  =  \frac{1}{2}\left(|u_S|^2 + u_T^2\right)
+ fu_Tx
- p + \frac{g}{\theta_0}\theta_S\left(z-\frac{H}{2}\right)
, \\
\frac1D\dede{l}{\theta_S}  
=  \frac{g}{\theta_0}\left(z-\frac{H}{2}\right).
\end{split}
\end{align}
Substituting these formula and rearranging leads us to the
Euler-Boussinesq vertical slice equations
\begin{align}
\begin{split}
\partial_t u_S + u_S\cdot\nabla u_S 
-fu_T{\hat{x}}
&= -\nabla p 
 + \frac{g}{\theta_0}\theta_S\hat{z}, \\
\partial_t u_T + u_S\cdot\nabla u_T 
+ fu_S\cdot\hat{{x}}
& =   -\frac{g}{\theta_0}\left(z-\frac{H}{2}\right){s}, \\
\nabla\cdot u_S & =  0, \\
\partial_t \theta_S + u_S\cdot\nabla\theta_S + u_T{s} & =  0,
\end{split}
\label{Eady-EPSDeqns1}
\end{align}
where $\hat{{x}}$ and $\hat{z}$ are the unit normals in the $x$-
and $z$- directions, respectively. In addition to providing a
variational derivation of the incompressible Euler-Boussinesq slice
model, \cite{cotter2013variational} also provided Lagrangians that
lead to an alpha-regularised Euler-Boussinesq slice model, and a
compressible Euler slice model. Since these models have a variational
derivation, they all have a conserved energy; they also have a conserved
potential vorticity as we shall now discuss.

Returning to (\ref{eq:dlduS/D}-\ref{eq:dlduT/D}), 
\cite{cotter2013variational} made the following direct calculation
to derive Kelvin's circulation theorem and thus conservation
of potential vorticity. Using the fact that the exterior derivative $\diff$
commutes with $\partial/\partial t$ and $\mathcal{L}_{u_S}$, we deduce
that
\begin{equation}
  \left(\pp{}{t} + \mathcal{L}_{u_S}\right)\diff \theta_S = -s \diff u_T.
\end{equation}
Combining with \eqref{eq:dlduT/D}, we obtain that
\revised{
\begin{align}
\nonumber\left(\pp{}{t} + \mathcal{L}_{u_S}\right)\frac{1}{D}\dede{l}{u_T}\diff\theta_S &=
  \frac{1}{D}\dede{l}{u_T} \left(\pp{}{t} + \mathcal{L}_{u_S}\right)
  \diff \theta_S + \diff\theta_S\left(\pp{}{t} + \mathcal{L}_{u_S}\right)\frac{1}{D}\dede{l}{u_T}, \\
\nonumber  & = -s\left(\frac{1}{D}\dede{l}{u_T}\diff u_T + \frac{1}{D}\dede{l}{\theta_S} \diff \theta_S\right), \\
  & = \left(\pp{}{t} + \mathcal{L}_{u_S}\right)\frac{1}{D}\dede{l}{u_S}\cdot\diff x - \diff \dede{l}{D},
\end{align}
where we used \eqref{eq:dlduS/D} in the last equality.}
Hence, we obtain
\begin{equation}
  \label{eq:vorticity}
  \left(\pp{}{t} + \mathcal{L}_{u_S}\right)\left(s\frac{1}{D}\dede{l}{u_\revised{S}}\revised{\cdot \diff x}
  \revised{-} \frac{1}{D}\dede{l}{\theta_S}\diff \theta_S\right) = \diff \dede{l}{D}.
\end{equation}
Integrating this around a closed curve $C(t)$ that is moving with
velocity $u_S$, we obtain a Kelvin circulation theorem
\begin{equation}
  \label{eq:kelvin}
  \dd{}{t}\oint_{C(t)}\left(s\frac{1}{D}\dede{l}{u_\revised{S}}\revised{\cdot \diff x}
  \revised{-} \frac{1}{D}\dede{l}{\theta_S}\diff \theta_S\right) = 0.
\end{equation}
In the case of the incompressible Euler Boussinesq slice model, the
circulation theorem reads
\begin{equation}
  \dd{}{t}\oint_{C(t)}\left(su_S
  - (u_T + fx)\nabla\theta_S\right)\cdot \diff x = 0.
\end{equation}
\revised{Applying $\diff$ to Equation \eqref{eq:vorticity}, we obtain}
\begin{equation}
  \left(\pp{}{t} + \mathcal{L}_{u_S} \right)\diff \left(s\frac{1}{D}\dede{l}{u_T}
  \revised{-} \frac{1}{D}\dede{l}{\theta_S}\diff \theta_S\right) = 0.
\end{equation}
Finally combining with the continuity equation we obtain conservation
of potential vorticity,
\revised{
\begin{align}
  \left(\partial_t + \mathcal{L}_{u_S}\right)q = 0,
\end{align}
\begin{equation}
    q = \frac{1}{D} \diff \left(s\frac{1}{D}\dede{l}{u_S} -
    \frac{1}{D}\dede{l}{u_T}\diff\theta_S\right).
  \end{equation}
In the case of the incompressible Euler-Boussinesq slice model this becomes
\begin{equation}
    q = s\nabla^\perp\cdot \left({u}_S -
    (u_T + fx)\nabla\theta_S\right).
\end{equation}
}
The circulation theorem is curious, because usually in the presence of
advected temperatures, we obtain a baroclinic source term, so that
circulation is only preserved on isosurfaces of $\theta_S$. In the
slice model case, this baroclinic term can be replaced by the Lie
derivative of an additional quantity, so that we get a conservation
law for any circulation loop. The new contribution of this paper
is to show that these extra conservation laws arise from new relabelling
symmetries that exist in the slice model framework.

\section{Relabelling symmetries and Noether's theorem for slice models}
\label{sec:relabelling}
In this section we describe the relabelling symmetry for slice models
and compute the corresponding conserved quantities via Noether's
theorem. Relabelling symmetry in fluid dynamics is the statement that
the reference configuration for the Lagrangian flow map is
arbitrary. In the slice framework, this means that we can arbitrarily
select an alternative reference configuration, which can be
transformed back to the original reference configuration by the slice
map represented by the relabelling transformation $(\psi,g)\in
\Diff(\Omega)\circledS \mathcal{F}(\Omega)$. After this change
of base coordinates, the Lagrangian flow map is transformed according
to
\begin{equation}
  (\phi, f) \mapsto (\phi,f)(\psi,g) = (\phi\circ\psi, f \circ \psi + g).
\end{equation}
\revised{Relabelling symmetries are such transformations that leave the initial data $\theta_{S,0},D_0,s_0$
  all invariant.
  In the group variable notation of previous sections, relabelling symmetries form a group
  \begin{equation}
    G_{D_0,\theta_{S,0},s_0} =
    \left\{
    (\psi,g) \in \Diff(\Omega)\circledS \mathcal{F}(\Omega) \quad | \quad \psi^*(D_0\diff S) = D_0\diff S
    \mbox{ and } g = (\theta_{S,0}-\psi^*\theta_{S,0}/s_0\right\}.
  \end{equation}
  }

To compute infinitesimal relabelling transformations, we consider a
1-parameter family of relabellings $(\psi_\epsilon,g_\epsilon)$ for
$\epsilon>0$, with $(\psi_0,g_0)=(\Id,0)$, so that
\begin{equation}
  (\psi_\epsilon, g_\epsilon) = (\Id,0) + \epsilon(v_S,v_T) +
  \mathcal{O}(\epsilon).
\end{equation}
Then,
\begin{align}
 \nonumber \delta(\phi, f) & = \lim_{\epsilon\to 0}\frac{1}{\epsilon}\left(
  (\phi\circ\psi_\epsilon, f\circ \psi_\epsilon + g_\epsilon)-(\phi,f)\right)\\
  & = \left(\nabla\phi\cdot v_S, \nabla f\cdot v_S + v_T\right).
\end{align}
Following \eqref{eq:flow map}, we define $(w_S,w_T)$ to be the unique element of
$\mathfrak{X}(\Omega)\circledS \mathcal{F}(\Omega)$ such that
\begin{equation}
  w_S\circ \phi = \nabla\phi\cdot v_S := \delta\phi, \quad
  w_T\circ \phi = \nabla f\cdot v_S + v_T := \delta f.
\end{equation}
Time differentiation gives
\begin{align}
\nonumber  \pp{}{t}(w_S\circ\phi)
  &= \pp{w_S}{t}\circ\phi + u_S\circ\phi\cdot(\nabla w_S)\circ\phi, \\
  \revised{(}\pp{}{t} w_T\circ \phi) &= 
  \pp{w_T}{t}\circ\phi + u_S\circ\phi\cdot(\nabla w_T)\circ\phi.
\end{align}
On the other hand,
\begin{align}
\nonumber  \pp{}{t}(\nabla\phi\cdot v_S) &=
  \nabla(u_S\circ \phi)\cdot v_S, \\
  & = (\nabla u_S)\circ \phi \cdot (\nabla\phi\cdot v_S), \\
\nonumber  & = (\nabla u_S)\circ \phi \cdot w_S\circ\phi, \\
\nonumber  \pp{}{t}(w_T \circ \phi) & =
  \nabla (u_T\circ \phi)\cdot v_S, \\
\nonumber  & = (\nabla u_T)\circ \phi \cdot (\nabla\phi\cdot v_S), \\
  & = (\nabla u_T)\circ \phi \cdot w_S\circ\phi.
\end{align}
Equating and composing with $\phi^{-1}$, we obtain
\begin{equation}
  \pp{}{t}(w_S,w_T) + [(u_S,u_T),(w_S,w_T)] = 0.
\end{equation}
This means that 
\begin{align}
  \label{eq:relabelling S}
  \delta u_S & = \dot{w}_S + [u_S,w_S] = 0, \\
  \label{eq:relabelling T}
  \delta u_T & = \dot{w}_T + u_S\cdot\nabla w_T - w_S\cdot\nabla u_T = 0,
\end{align}
\emph{i.e.} $u_S$ and $u_T$ are left invariant under relabelling
transformations.

To leave $\theta_S$ invariant, the infinitesimal 
symmetries $(w_S,w_T)$ also need to satisfy
\begin{equation}
  \label{eq:relabelling theta}
  \delta \theta_S = -L_{(w_S,w_T)}(\theta_S,s) = -w_S\cdot\nabla \theta_S -sw_T = 0.
\end{equation}
In the $s=0$ case, this restricts $w_S$ to being tangential to the contours
of $\theta$, hence the baroclinic torque term. However, if $s\neq 0$,
we can take \emph{any} $w_S$ and then pick
\begin{equation}
  \label{eq:w_T theta}
  w_T = -\frac{1}{s}w_S\cdot\nabla \theta_S,
\end{equation}
i.e. any change in $\theta_S$ caused by advection with $w_S$ can be
corrected by a source term from $w_T$. For this to work, we need
equation \eqref{eq:w_T theta} to be compatible with
\eqref{eq:relabelling T}. This is verified by the following
proposition.
\begin{proposition}
  \label{eq:things commute}
  Let $w_S$ be a vector field with arbitrary initial condition, and
  let $w_T$ be a function satisfying \eqref{eq:w_T theta}
  initially. Let $(w_S,w_T)$ satisfy the particle relabelling symmetry
  conditions (\ref{eq:relabelling S}-\ref{eq:relabelling T}), and let
  $\theta$ evolve according to
  \begin{equation}
    \label{eq:dtheta/dt}
    \pp{\theta_S}{t} = -L_{(u_S,u_T)}(\theta_S,s) = -(u_S \cdot \nabla \theta_S + su_T).
  \end{equation}
  Then $\theta_S$, $w_T$ satisfy \eqref{eq:w_T theta} for all time.
\end{proposition}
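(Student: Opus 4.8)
The plan is to reduce the statement to uniqueness of solutions of a linear advection equation. Set $r := s\,w_T + w_S\cdot\nabla\theta_S$; the claim \eqref{eq:w_T theta} is exactly $r=0$. By hypothesis $r$ vanishes at the initial time, so it suffices to show that $r$ is transported trivially by $u_S$, i.e.\ $(\partial_t + u_S\cdot\nabla)r = 0$. Then $r$ is constant along the characteristics of $u_S$ (equivalently, $r\circ\phi$ is time-independent), and hence $r\equiv 0$ for all time.

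To establish this transport equation I would assemble two facts already available. First, applying $\diff$ to the evolution equation \eqref{eq:dtheta/dt} for $\theta_S$ and using that $\diff$ commutes with $\partial_t$ and with $\mathcal{L}_{u_S}$ gives $(\partial_t + \mathcal{L}_{u_S})\diff\theta_S = -s\,\diff u_T$ (this is the identity already used in deriving \eqref{eq:vorticity}). Second, the relabelling conditions \eqref{eq:relabelling S} and \eqref{eq:relabelling T} say precisely that $(\partial_t + \mathcal{L}_{u_S})w_S = 0$, where $\mathcal{L}_{u_S}w_S = [u_S,w_S]$ is the Lie derivative of the vector field $w_S$, and that $(\partial_t + \mathcal{L}_{u_S})w_T = w_S\cdot\nabla u_T$, where $\mathcal{L}_{u_S}w_T = u_S\cdot\nabla w_T$ is the Lie derivative of the function $w_T$.

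Next I would differentiate the scalar $w_S\cdot\nabla\theta_S$, viewed as the contraction of the one-form $\diff\theta_S$ with the vector field $w_S$, along the flow of $u_S$. Since both $\partial_t$ and $\mathcal{L}_{u_S}$ act as derivations with respect to this contraction, $(\partial_t + \mathcal{L}_{u_S})(w_S\cdot\nabla\theta_S)$ splits into the contraction of $w_S$ with $(\partial_t + \mathcal{L}_{u_S})\diff\theta_S = -s\,\diff u_T$, which is $-s\,w_S\cdot\nabla u_T$, plus the contraction of $\diff\theta_S$ with $(\partial_t + \mathcal{L}_{u_S})w_S = 0$, which vanishes. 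Combining with $(\partial_t + \mathcal{L}_{u_S})(s w_T) = s\,w_S\cdot\nabla u_T$ gives $(\partial_t + \mathcal{L}_{u_S})r = 0$; since $\mathcal{L}_{u_S}$ reduces to $u_S\cdot\nabla$ on functions, this is the asserted advection equation. Equivalently, one can run the whole computation in vector-calculus notation: expand $\partial_t(w_S\cdot\nabla\theta_S) + u_S\cdot\nabla(w_S\cdot\nabla\theta_S)$, collect the terms proportional to $\nabla\theta_S$ into $(\partial_t w_S + [u_S,w_S])\cdot\nabla\theta_S = 0$, and collect the remaining second-derivative terms into $w_S\cdot\nabla(\partial_t\theta_S + u_S\cdot\nabla\theta_S) = -s\,w_S\cdot\nabla u_T$.

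I do not expect a genuine obstacle here: the content is a Leibniz-rule bookkeeping identity. The points requiring care are the sign in $(\partial_t + \mathcal{L}_{u_S})\diff\theta_S = -s\,\diff u_T$, the correct derivation (Leibniz) rules for $\partial_t$ and $\mathcal{L}_{u_S}$ acting on the pairing of a one-form with a vector field, and the standing smoothness assumption needed for uniqueness of the transport equation satisfied by $r$. If one wishes to avoid invoking PDE uniqueness, the cleanest conclusion is to pull back by the Lagrangian flow map: the computation above is equivalent to $\partial_t\big(r\circ\phi\big)=0$, so $r\circ\phi$ equals its vanishing initial value and hence $r\equiv 0$.
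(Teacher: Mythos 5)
Your proposal is correct and follows essentially the same route as the paper: both differentiate the defect $r = s\,w_T + w_S\cdot\nabla\theta_S$ (the nontrivial component of $\mathcal{L}_{(w_S,w_T)}(\theta_S,s)$) in time, split it by the Leibniz rule, kill one term with the relabelling conditions $(\partial_t+\mathcal{L}_{u_S})w_S=0$, $(\partial_t+\mathcal{L}_{u_S})w_T=w_S\cdot\nabla u_T$, and cancel the other using $(\partial_t+\mathcal{L}_{u_S})\diff\theta_S=-s\,\diff u_T$. If anything your conclusion is written more carefully than the paper's: you obtain the closed homogeneous transport equation $(\partial_t+u_S\cdot\nabla)r=0$ and conclude $r\equiv 0$ from the vanishing initial data by uniqueness along characteristics, whereas the paper's displayed chain inserts $\mathcal{L}_{(w_S,w_T)}(\theta_S,s)=0$ under an underbrace mid-computation, leaving that persistence step implicit.
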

\begin{proof}
  First note that $\mathcal{L}_{(u_S,u_T)}(\theta_S,s)$ is a Lie algebra action
  of $\mathfrak{X}(\Omega)\circledS \mathcal{F}(\Omega)$ on
  $\mathcal{F}(\Omega)\times \mathbb{R}$, i.e.
  \begin{equation}
    \mathcal{L}_{(u_S,u_T)}\mathcal{L}_{(w_S,w_T)}(\theta_S,s) -
    \mathcal{L}_{(w_S,w_T)}\mathcal{L}_{(u_S,u_T)}(\theta_S,s)
    = \mathcal{L}_{[(u_S,u_T),(w_S,w_T)]},
  \end{equation}
  where $[(u_S,u_T),(w_S,w_T)]$ is the bracket on $\mathfrak{X}(\Omega)\circledS \mathcal{F}(\Omega)$ defined in \cite{cotter2013variational}, given by
  \begin{equation*}
[({u}_S,u_T),({w}_S,w_T)] = \left([{u}_S,{w}_S] ,\,{u}_S\cdot\nabla w_T - {w}_S\cdot\nabla u_T \right).
\label{SDalgebraaction}
  \end{equation*}
  Then we have
  \begin{align}
\nonumber    \pp{}{t}\mathcal{L}_{(w_S,w_T)}(\theta_S,s) &
    = \mathcal{L}_{\pp{}{t}(w_S,w_T)}(\theta_S,s) +
    \mathcal{L}_{(w_S,w_T)}(\dot{\theta}_S,0), \\
\nonumber    &= \mathcal{L}_{\pp{}{t}(w_S,w_T)}(\theta_S,s) +
    \mathcal{L}_{(w_S,w_T)}\mathcal{L}_{(u_S,u_T)}(\theta_S,s), \\
\nonumber    &= \mathcal{L}_{\pp{}{t}(w_S,w_T)}(\theta_S,s) -
    \mathcal{L}_{(w_S,w_T)}\mathcal{L}_{(u_S,u_T)}(\theta_S,s), \\
 \nonumber   &= \mathcal{L}_{\pp{}{t}(w_S,w_T)}(\theta_S,s) -
    \mathcal{L}_{[(u_S,u_T),(w_S,w_T)]}(\theta_S,s) -
    \mathcal{L}_{(u_S,u_T)}\underbrace{\mathcal{L}_{(w_S,w_T)}(\theta_S,s)}_{=0}, \\
    &= \mathcal{L}_{\underbrace{\pp{}{t}(w_S,w_T)+[(u_S,u_T),(w_S,w_T)]}_{=0}}(\theta_S,s)=0,
  \end{align}
  as required.
\end{proof}
\begin{corollary}
  Let $u_S,u_T,\theta_S,D$ solve the equations (\ref{eq:EPS}-\ref{eq:EPT}). Then the potential vorticity
\revised{$q$}  (weakly) satisfies the Lagrangian conservation law 
  \begin{equation}
    \pp{q}{t} + u_S\cdot\nabla q = 0,
  \end{equation}
  \revised{as a consequence of Noether's theorem.}
\end{corollary}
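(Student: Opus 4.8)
The strategy is to apply the Euler--Poincar\'e form of Noether's theorem from \cite{cotter2013noether} to the relabelling symmetries identified above, and then to show that conservation of the resulting family of Noether quantities is precisely equivalent to the claimed Lagrangian conservation law for $q$. By the discussion preceding Proposition~\ref{eq:things commute}, every vector field $w_S$ with $\mathcal{L}_{w_S}(D\diff S)=0$, paired with $w_T = -\tfrac1s\, w_S\cdot\nabla\theta_S$, generates a relabelling symmetry: the conditions \eqref{eq:relabelling S}--\eqref{eq:relabelling theta} and \eqref{eq:D} hold, and Proposition~\ref{eq:things commute} guarantees that \eqref{eq:w_T theta} is propagated by the dynamics, so the symmetry is well defined for all time (the constraint $\mathcal{L}_{w_S}(D\diff S)=0$ need only be imposed at $t=t_0$, being then preserved by \eqref{eq:relabelling S} and the continuity equation). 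Noether's theorem then gives that the momentum--map pairing
\[
  N[w_S] \;=\; \int_\Omega \left( \dede{l}{u_S}\cdot w_S + \dede{l}{u_T}\,w_T \right) \diff S
\]
is constant along solutions of \eqref{eq:EPS}--\eqref{eq:EPT}.

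The next step is to rewrite $N[w_S]$ in terms of $q$. Inserting $w_T=-\tfrac1s\, w_S\cdot\nabla\theta_S$ identifies the integrand of $N[w_S]$ with $\alpha(w_S)\,D\diff S$, where $\alpha = \tfrac1D\big(\dede{l}{u_S}\cdot\diff x - \tfrac1s\dede{l}{u_T}\diff\theta_S\big)$ is a one-form on the slice. The constraint $\mathcal{L}_{w_S}(D\diff S)=0$ says that $w_S\contract D\diff S$ is closed; assuming $\Omega$ is simply connected (on a multiply connected slice the harmonic part contributes the familiar additional circulation invariants and presents no difficulty), we may write $w_S\contract D\diff S = -\diff\chi$ for a stream function $\chi$, unique up to an additive constant and chosen to vanish on $\partial\Omega$. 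Using the pointwise identity $\alpha(w_S)\,D\diff S = \alpha\wedge(w_S\contract D\diff S)$ together with an integration by parts,
\[
  N[w_S] \;=\; -\int_\Omega \chi\,\diff\alpha \;+\; \int_{\partial\Omega}\chi\,\alpha \;=\; -\frac1s\int_\Omega \chi\, q\, D\diff S ,
\]
since $\diff\alpha = \tfrac1s\, q\,D\diff S$ by the definition of $q$ in \eqref{eq:vorticity}, and the boundary term vanishes by the choice of $\chi$.

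It remains to extract the evolution equation for $q$. Combining $\dot w_S + [u_S,w_S]=0$ from \eqref{eq:relabelling S} with the continuity equation shows $(\partial_t+\mathcal{L}_{u_S})\big(w_S\contract D\diff S\big)=0$, hence $\diff\big(\partial_t\chi + u_S\cdot\nabla\chi\big)=0$; absorbing the spatial constant, $\chi$ is advected as a scalar, $\partial_t\chi + u_S\cdot\nabla\chi=0$, while $\chi|_{t=t_0}$ remains arbitrary. Differentiating $N[w_S]$ and using the continuity equation for $D\diff S$ and the advection of $\chi$,
\[
  0 \;=\; \dd{}{t}N[w_S] \;=\; -\frac1s\int_\Omega \chi\,\big(\partial_t q + u_S\cdot\nabla q\big)\,D\diff S .
\]
As $\chi|_{t=t_0}$ is arbitrary, $D>0$, and $t_0$ is arbitrary, we conclude $\partial_t q + u_S\cdot\nabla q = 0$. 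Since the whole argument is phrased against test fields $\chi$, it yields exactly the weak form of the conservation law, as stated.

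The conceptual heart of the matter --- that \eqref{eq:w_T theta} defines a genuine symmetry rather than an overdetermined constraint --- has already been settled by Proposition~\ref{eq:things commute}; what remains delicate is the bookkeeping: pinning down the momentum--map pairing with the correct density weight, verifying that the stream-function substitution and integration by parts reproduce \emph{exactly} the $q$ of \eqref{eq:vorticity}, handling the boundary and topology (where, on multiply connected domains, one gains extra conserved loop circulations rather than losing the argument), and noting the essential role of $s\neq 0$ --- for $s=0$ only $w_S$ tangent to the $\theta_S$-contours are admissible, and one recovers just the classical baroclinic circulation theorem on those contours.
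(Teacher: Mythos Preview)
Your proof is correct and follows essentially the same route as the paper's: both exploit the stream-function representation $w_S\contract D\,\diff S=\diff\psi$ of the density-preserving relabelling field, pair it with $w_T=-\tfrac1s\,w_S\cdot\nabla\theta_S$, identify the Noether charge with $\int_\Omega \psi\,q\,D\,\diff S$ (up to the factor $s^{-1}$), and use advection of the stream function to extract the pointwise law. The only cosmetic differences are that the paper starts from an advected compactly supported $\psi$ and defines $w_S$ from it (rather than the reverse), and re-derives the Noether conservation law explicitly from $\delta S=0$ rather than quoting it.
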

\begin{proof}
  \revised{
  Given initial condition $D_0$ for density, we pick arbitrary $\psi_0 \in \Lambda^0(\Omega)$, compactly supported in the interior of $\Omega$.
  Then we choose $\psi$ as the solution of
  \begin{equation}
    \label{eq:psi transport}
    \left(\pp{}{t} + \mathcal{L}_{u_S}\right)\psi = 0,
  \end{equation}
  with initial condition $\psi_0$. 
  We then define $w_s$ via
  \begin{equation}
    \label{eq:d psi}
    w_s \lrcorner D \diff S = \diff \psi.
  \end{equation}
  Then
  \begin{align}
    \left(\pp{}{t} + \mathcal{L}_{u_S}\right)w_s \lrcorner D \diff S
    &=\left(\pp{}{t} + \mathcal{L}_{u_S}\right)\diff \psi, \\
    &=\diff\left(\pp{}{t} + \mathcal{L}_{u_S}\right)\psi = 0,
  \end{align}
  and we deduce from the chain rule and Equation \eqref{eq:D} that
  $w_s$ satisfies Equation \eqref{eq:relabelling S}. Further,
  Equation \eqref{eq:d psi} implies that 
  \begin{equation}
    \label{eq:persistence}
    \delta(D \diff S) = \mathcal{L}_{w_S}(D\diff S) = d(w_S\lrcorner(D\diff S)) = 0,
  \end{equation}
  for all times, \emph{i.e.} $w_s$ is a relabelling symmetry for $D$.
  We then choose
  \begin{equation}
    w_T = -\frac{1}{s}w_S\cdot\nabla\theta_S,
  \end{equation}
  which defines a relabelling symmetry for $\theta_S$ by Lemma \ref{eq:things commute}.}

  Next, we follow the steps of Noether's Theorem, considering the
  variations in the action $S$ under the relabelling transformations
  generated by $(w_S,w_T)$ defined above. Since these transformations
  leave $(u_S,u_T)$, $D$ and $\theta_S$ invariant, the action does not
  change, and we get
  \begin{align}
 \nonumber   0  &= \delta S \\ 
 \nonumber   &= \int_{t_0}^{t_1} \int_\Omega \dede{l}{u_S}\cdot \delta u_S + \dede{l}{u_T}\delta u_T
    + \dede{l}{D}\delta D + \dede{l}{\theta_S}\delta\theta_S + \dede{l}{s}\delta s
    \diff S \diff t, \\
\nonumber    &= \int_{t_0}^{t_1} \int_\Omega \dede{l}{u_S}\cdot \left(\pp{}{t}w_S + [u_S,w_S]\right)\diff S + \dede{l}{u_T}\left(\pp{}{t}w_T + \mathcal{L}_{u_S}w_T 
    - \mathcal{L}_{w_S}u_T\right)\diff S \\
  \nonumber  & \qquad  - \dede{l}{D}\mathcal{L}_{w_S}(D\diff S) -
    \mathcal{L}_{(w_S,w_T)}\theta_S
    \dede{l}{\theta_S}\diff S \diff t, \\
 \nonumber   &= \left[\int_\Omega
      \dede{l}{u_S}\cdot w_S + \dede{l}{u_T}w_T \diff S
      \right]_{t=t_1}^{t=t_2} \nonumber \\
 \nonumber   &\qquad -\int_{t_0}^{t_1} \int_\Omega w_S\cdot \left(
    \left(\pp{}{t} + \mathcal{L}_{u_S}\right)\dede{l}{u_S}\cdot\diff x \otimes
    \diff S + \dede{l}{u_T}\diff u_T\otimes \diff S + \dede{l}{\theta_S}\diff \theta_S \otimes \diff S
    - D\diff \dede{l}{D} \otimes \diff S\right) \nonumber \\
 \nonumber   &\qquad  + \int_\Omega w_T\left(\left(\pp{}{t} + \mathcal{L}_{u_S}\right)\dede{l}{u_T}\diff S 
    + \dede{l}{\theta_S}s\diff S\right)\diff t, \\
    &= \left[\int_\Omega
      \dede{l}{u_S}\cdot w_S + \dede{l}{u_T}w_T
      \right]_{t=t_1}^{t=t_2} \diff S,
  \end{align}
  since $(u_S,u_T)$, $D$ and $\theta_S$ solve (\ref{eq:EPS}-\ref{eq:EPT}).
  For sufficiently smooth solutions in time, we may consider the limit $t_1\to t_2$, and we get
  \begin{align}
\nonumber    0 & = \dd{}{t}\int_\Omega \dede{l}{u_S}\cdot w_S + \dede{l}{u_T}\cdot w_T\diff
    V, \\
\nonumber     & = \dd{}{t}\int_\Omega \left(\dede{l}{u_S}\cdot\diff x - \revised{\frac{1}{s}\dede{l}{u_T}}\diff \theta
    \right)\wedge w_S\lrcorner \diff S, \\
 \nonumber   & = \dd{}{t}\int_\Omega \frac{1}{D}\left(\dede{l}{u_S}\cdot\diff x - \revised{\frac{1}{s}\dede{l}{u_T}}\diff \theta
    \right)\wedge w_S\lrcorner D \diff S, \\
  \nonumber  & = \dd{}{t}\int_\Omega \frac{1}{D}\left(\dede{l}{u_S}\cdot\diff x - \revised{\frac{1}{s}\dede{l}{u_T}}\diff \theta
    \right)\wedge \diff \psi, \label{eq:psi KCT} \\
    \nonumber & = \int_\Omega \pp{}{t}\frac{1}{D}\left(\dede{l}{u_S}\cdot\diff x - \revised{\frac{1}{s}\dede{l}{u_T}}\diff \theta
    \right)\wedge \diff \psi
    + \frac{1}{D}\left(\dede{l}{u_S}\cdot\diff x - \revised{\frac{1}{s}\dede{l}{u_T}}\diff \theta
    \right)\wedge \pp{}{t} \diff \psi,\\
    \nonumber & = \int_\Omega \pp{}{t}\frac{1}{D}\left(\dede{l}{u_S}\cdot\diff x - \revised{\frac{1}{s}\dede{l}{u_T}}\diff \theta
    \right)\wedge \diff \psi
    - \frac{1}{D}\left(\dede{l}{u_S}\cdot\diff x - \revised{\frac{1}{s}\dede{l}{u_T}}\diff \theta
    \right)\wedge \mathcal{L}_{u_S}\diff \psi,\\
    & = -\int_\Omega \psi\left(\pp{}{t} + \mathcal{L}_{u_S}\right)\diff\frac{1}{D}\left(\dede{l}{u_S}\cdot\diff x - \revised{\frac{1}{s}\dede{l}{u_T}}\diff \theta
    \right),
  \end{align}
  \revised{
  which holds for arbitrary $\psi$, hence we deduce that
  \begin{align}
\nonumber    0 & = \left(\pp{}{t} + \mathcal{L}_{u_S}\right)(qD \diff S), \\
    & = \left(\pp{q}{t} + \mathcal{L}_{u_S}q\right)(D \diff S),
  \end{align}
  hence the result (since $D$ is positive).}
\end{proof}
%% \begin{remark}
%%   Note that we can also directly (but formally) derive the Kelvin
%%   circulation theorem from the above calculation as follows. Take a
%%   materially-convected closed loop $C(t)$, and set $\psi$ to be the
%%   indicator function for the interior of $C(t)$. Then $\psi$ is a weak
%%   solution of \eqref{eq:psi transport}, and we have 
%%   \begin{equation}
%%     \int_\Omega v \wedge \diff \psi
%%     = \int_{C(t)} v
%%   \end{equation}
%%   for smooth $v\in \Lambda^1(\Omega)$. Then, \eqref{eq:psi
%%     KCT} gives
%%   \begin{equation}
%%     \dd{}{t}\int_{C(t)} \frac{1}{D}\left(\dede{l}{u_S}\cdot\diff x - \revised{\frac{1}{s}\dede{l}{u_T}}\diff \theta
%%     \right) = 0, 
%%   \end{equation}
%%   which is the Kelvin circulation theorem. To establish this
%%   analytically, one would need to consider a sequence of smooth
%%   functions $\psi_n$ that converge to the indicator function of the
%%   interior of $C(t)$ as $n\to\infty$, and take the limit in
%%   \eqref{eq:psi KCT}, checking the conditions under which the limit
%%   may be exchanged with the time derivative.
%% \end{remark}

\section{Summary and outlook}
\label{sec:summary}
  In this paper, we provided a new proof that the conserved potential
  vorticity in vertical slice models arises through Noether's Theorem
  upon consideration of the relabelling symmetries consisting of
  rearrangements of the vertical slice combined with transverse motion
  that restores the original structure of the potential temperature
  $\theta_S$. The proof applies to a horizontally-periodic geometry
  with rigid top and bottom boundary but can be easily extended to
  other slice geometries.

  A future direction will be to use this variational structure to
  build potential vorticity conserving vertical slice models, and
  to make further comparisons with the optimal transport formulation.
  It would also be interesting to use the relabelling transformations
  in this paper to derive balanced models using the variational asymptotics
  approach of \cite{oliver2006variational}.

\bibliography{slice_variational}

\end{document}